\newtheorem{thm}{Theorem}
\newtheorem{lem}[thm]{Lemma}
\theoremstyle{definition}
\newtheorem{dfn}[thm]{Definition}
\definecolor{codebgcolor}{RGB}{240,240,240}
\newcommand{\funcname}[1]{{\textsc{#1}}}
\title{A linear-time algorithm for the maximum-area inscribed triangle in a convex polygon}
\author{Yoav Kallus\footnote{Santa Fe Institute, 1399 Hyde Park Road, Santa Fe, NM 87501, USA. Email: \texttt{yoav@santafe.edu}}}
\date{June 8, 2017}
\begin{document}
\maketitle

\begin{abstract}
    Given the $n$ vertices of a convex polygon in cyclic order, can the triangle of maximum area inscribed in $P$ be determined by an
    algorithm with $O(n)$ time complexity? A purported linear-time algorithm by Dobkin and Snyder from 1979 has recently been shown
    to be incorrect by Keikha, L\"offler, Urhausen, and van der Hoog. These authors give an alternative algorithm with $O(n \log n)$ time
    complexity. Here we give an algorithm with linear time complexity.
\end{abstract}

\renewcommand{\thefootnote}{\fnsymbol{footnote}} 
\footnotetext{\emph{Keywords:} Computational geometry, convex polygon, inscribed triangle.}     
\footnotetext{\emph{2010 Mathematics Subject Classification:} 
68Q25  %Analysis of algorithms and problem complexity
65D18  %Computer graphics, image analysis, and computational geometry
52B55  	%Computational aspects related to convexity
}
\renewcommand{\thefootnote}{\arabic{footnote}} 

\section{Introduction}

The problem of finding a triangle inscribed in a convex polygon that maximizes the area among all inscribed triangles is a classical problem in computational geometry.
Recently, Keikha, L\"offler, Urhausen, and van der Hoog \cite{keikha} showed that an algorithm of Dobkin and Snyder \cite{dobkin} that purports to find the largest
triangle inscribed in a convex $n$-gon in $O(n)$ computational steps is incorrect by presenting a 9-gon for which the algorithm fails to find the largest triangle.
There are algorithms known that solve the problem in $O(n\log n)$ time complexity \cite{boyce,keikha}.
These algorithms make use of the subproblem of rooted triangles --- the maximum area triangle with some fixed vertex.
Here, by focusing on a different subproblem, that of anchored triangles --- the maximum area triangle with one side parallel to some fixed direction ---
we are able to produce an algorithm that finds the largest triangle inscribed in an $n$-gon in $O(n)$ computational steps.

\section{Theory}

We begin by defining the notion of a triangle inscribed in $P$ that is anchored to a direction $\mathbf{u}\in S^1$,
as previously introduced in \cite{kallus}.

\begin{dfn}
    Let $P$ be a convex polygon, $\mathbf{u}\in S^2$ be a unit vector, and $\mathbf{a}\mathbf{b}\mathbf{c}$ a triangle inscribed in $P$.
    We say $\mathbf{a}\mathbf{b}\mathbf{c}$ is
    \textit{anchored} to $\mathbf{u}$ if $\mathbf{u}$ is the outer normal to the edge $\mathbf{b}\mathbf{c}$ and $\mathbf{a}\mathbf{b}\mathbf{c}$
    achieves the maximum area out of all triangles inscribed in $P$ with $\mathbf{u}$ as the outer normal to the edge $\mathbf{b}\mathbf{c}$.
\end{dfn}

Implicit in the definition of a triangle anchored to $\mathbf{u}$ are the identities of its vertices $\mathbf{a}$, $\mathbf{b}$, and $\mathbf{c}$ which
are always to be in counter-clockwise order, with the latter two delimiting the edge to which $\mathbf{u}$ is normal. We will also say a triangle is \textit{anchored}
if it is anchored to some direction.

Clearly, the maximum-area triangle inscribed in $P$ is anchored to some $\mathbf{u}\in S^1$. In fact, it is anchored to all three of its edge outer normals.
Therefore, all three vertices of the maximum-area triangle inscribed in $P$ are vertices of $P$.

\begin{figure}
    \center
    \includegraphics[width=0.49\textwidth]{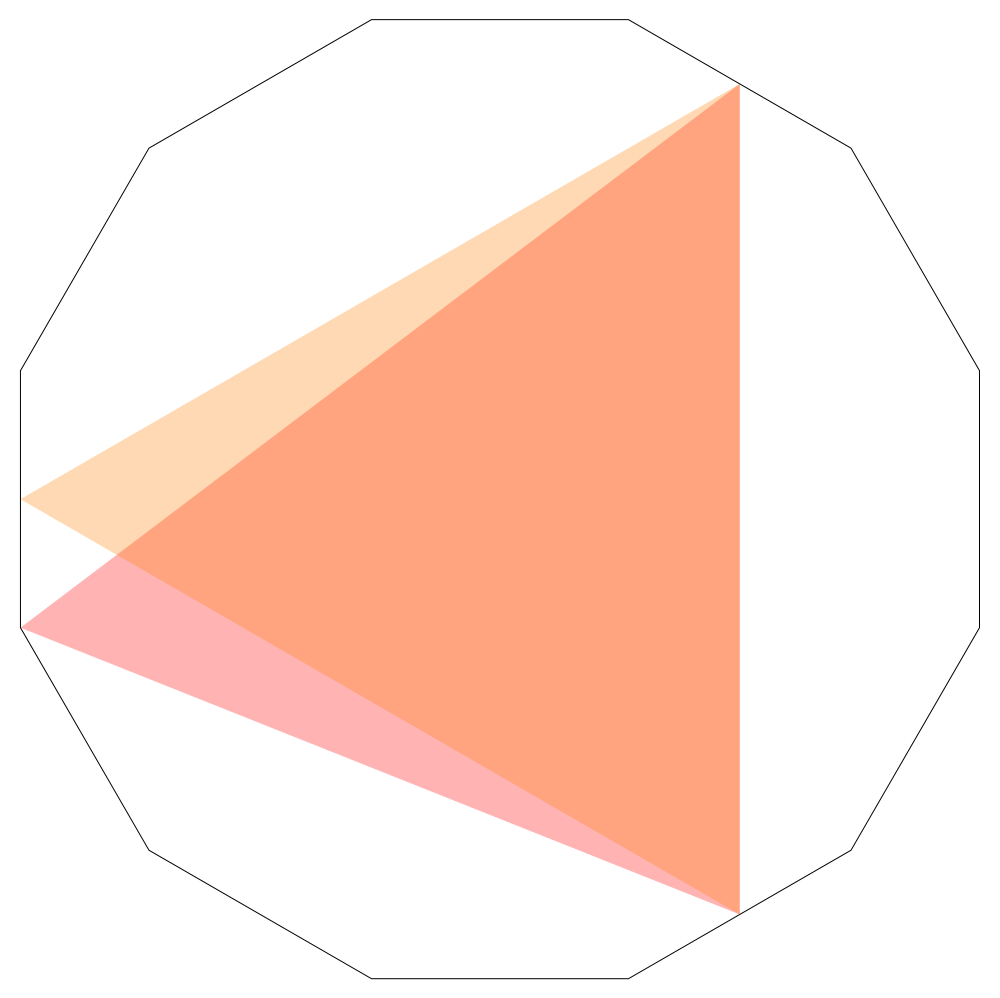}
    \includegraphics[width=0.49\textwidth]{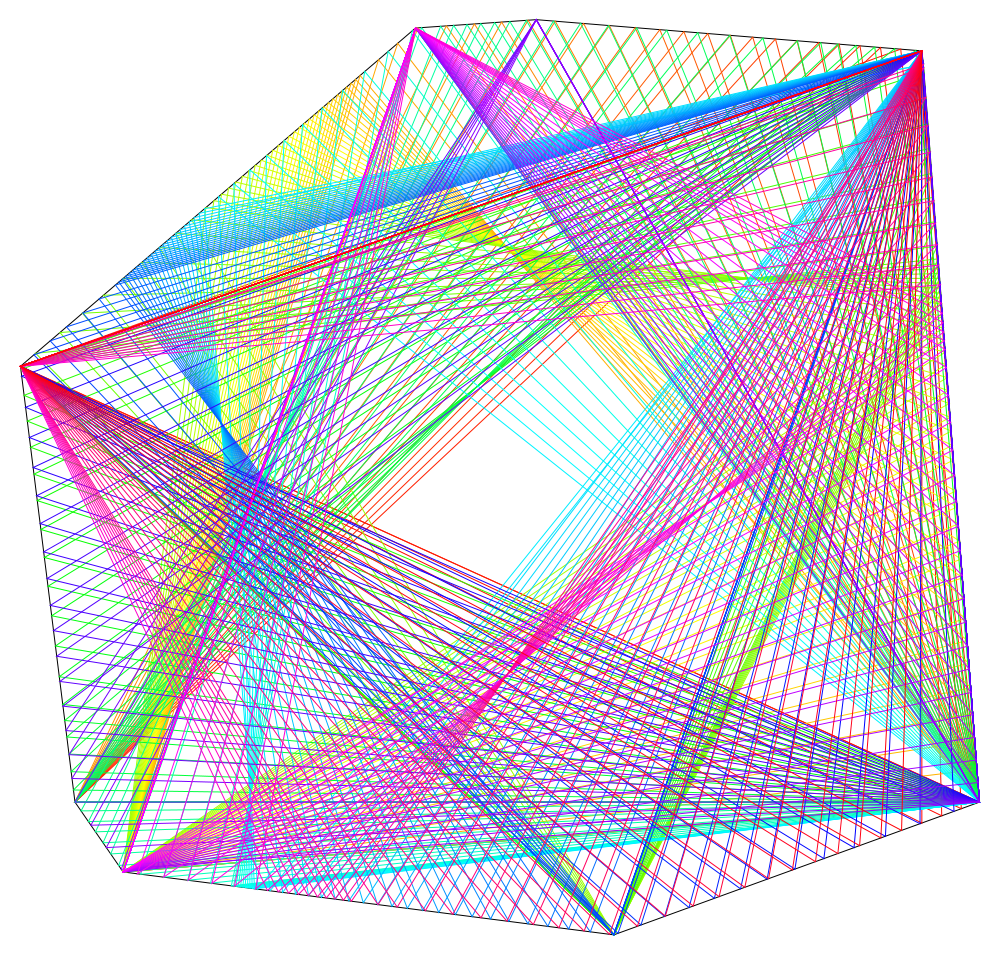}
    \caption{Left: two triangles inscribed in the regular dodecagon anchored to one of the dodecagon's edge normals.
	Right: triangles inscribed in the 9-gon from \cite{keikha} anchored to 240 equally spaced directions.}
\end{figure}

\begin{lem}
    \label{lem:uniq}
    \begin{enumerate}
	\item If $\mathbf{a}\mathbf{b}\mathbf{c}$ is anchored to $\mathbf{u}$, then $-\mathbf{u}$ is an outer normal to $P$ at $\mathbf{a}$.
	\item If $\mathbf{a}\mathbf{b}\mathbf{c}$ and $\mathbf{a}'\mathbf{b}'\mathbf{c}'$ are both anchored to $\mathbf{u}$, then $\mathbf{b}=\mathbf{b}'$ and $\mathbf{c}=\mathbf{c}'$.
	\item If $\mathbf{a}\mathbf{b}\mathbf{c}$ is a local maximum of the area among triangles inscribed in $P$ with $\mathbf{u}$ as the outer normal to the edge $\mathbf{b}\mathbf{c}$, then it is also a global maximum.
    \end{enumerate}
\end{lem}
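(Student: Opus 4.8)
The plan is to fix coordinates with $\mathbf{u}$ pointing vertically upward, so that the requirement that $\mathbf{u}$ be the outer normal to the edge $\mathbf{b}\mathbf{c}$ becomes the statement that $\mathbf{b}$ and $\mathbf{c}$ lie on a common horizontal chord of $P$, at some height $t:=\mathbf{b}\cdot\mathbf{u}=\mathbf{c}\cdot\mathbf{u}$, with the apex $\mathbf{a}$ below it; the area is then $\tfrac12\,|\mathbf{b}\mathbf{c}|\,h$ with $h=(\mathbf{b}-\mathbf{a})\cdot\mathbf{u}$. For part~(1) I would freeze $\mathbf{b}$ and $\mathbf{c}$: since $\mathbf{a}\mathbf{b}\mathbf{c}$ maximizes area among all inscribed triangles with $\mathbf{u}$ normal to $\mathbf{b}\mathbf{c}$, it is in particular optimal when only $\mathbf{a}$ moves, so $\mathbf{a}$ maximizes $h=\mathrm{const}-\mathbf{a}\cdot\mathbf{u}$ over $\mathbf{a}\in P$. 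Thus $\mathbf{a}$ minimizes $\mathbf{a}\cdot\mathbf{u}$, equivalently maximizes $\mathbf{a}\cdot(-\mathbf{u})$, which is exactly the assertion that $-\mathbf{u}$ is an outer normal to $P$ at $\mathbf{a}$.

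The engine for parts~(2) and~(3) is the chord-length function $w(t)$, the length of the chord of $P$ perpendicular to $\mathbf{u}$ at height $t$. By part~(1) the apex sits at the bottom, at height $a_u:=\min_{\mathbf{p}\in P}\mathbf{p}\cdot\mathbf{u}$, and for a fixed height $t$ the base has length at most $w(t)$, attained by the two endpoints of the chord; hence the largest area with the edge at height $t$ is $A(t)=\tfrac12\,w(t)\,(t-a_u)$, and the anchored area is $\max_t A(t)$. First I would show $w$ is concave: in these coordinates the right end of the chord is a concave function of $t$ and the left end a convex function of $t$, each directly from convexity of $P$, so their difference $w$ is concave. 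Since $w$ is concave and positive on the interior, both $\log w$ and $\log(t-a_u)$ are concave, whence $\log A$ is concave and $A$ is log-concave, hence unimodal. This yields part~(2): if the maximum of $A$ were attained on a nondegenerate interval, $A$ would equal a positive constant $A^\ast$ there, forcing $w(t)=2A^\ast/(t-a_u)$, a strictly convex function of $t$, in contradiction with concavity of $w$; so the maximizing height $t^\ast$ is unique, and with it the chord and its endpoints $\mathbf{b},\mathbf{c}$.

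For part~(3) I would reduce a local maximum to the two independent modes of variation. With the base fixed, the height $h=t-\mathbf{a}\cdot\mathbf{u}$ is, along the part of $\partial P$ below the chord, a unimodal function of arc length peaking at the lowest point of $P$; so a local maximum forces $\mathbf{a}$ to that lowest point, which is independent of $t$. Holding $\mathbf{a}$ there and varying $t$ traces a curve inside the admissible set along which the area equals $A(t)$; a local maximum of the full problem restricts to a local maximum of $A(t)$, which by unimodality is the global maximum of $A$. Since the maximum area over all admissible triangles is precisely $\max_t A(t)$, the local maximum is global.

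The step I expect to be most delicate is this last passage from local to global: one must check that the two one-parameter reductions --- sliding $\mathbf{a}$ down to the lowest point, then sliding the base in $t$ --- compose into an honest curve within the constraint set, so that two-dimensional local optimality does restrict to local optimality along it. The concavity of $w$, and the log-concavity of $A$ it produces, is exactly what turns local into global in the $t$-direction, while the unimodality of $h$ along $\partial P$ handles the $\mathbf{a}$-direction. A minor point to dispatch is the degenerate case in which the chord at some height is a horizontal edge of $P$; there the edge endpoints still furnish the unique longest base, so none of the conclusions change.
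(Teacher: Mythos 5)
Your proposal is correct and follows essentially the same route as the paper: part (1) by sliding $\mathbf{a}$ to the point of $P$ extremal in the $-\mathbf{u}$ direction, and parts (2)--(3) by observing that the chord-width function is concave and that the resulting area profile $A(t)=\tfrac12 w(t)(t-a_u)$ is unimodal with a unique maximizer (you via $\log A$ concave, the paper via $\sqrt{A}$ concave --- the same geometric-mean trick). Your explicit treatment of the degenerate sub-chord base and of reducing the two-dimensional local maximum to the one-parameter curve is, if anything, slightly more careful than the paper's.
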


\begin{proof}
    To prove (1), suppose that $-\mathbf{u}$ is not an outer normal to $P$ at $\mathbf{a}$, then pick some $\mathbf{a}'$ such
    that $-\mathbf{u}$ is an outer normal to $P$ at $\mathbf{a}'$. Then $-\mathbf{u}\cdot\mathbf{a}' > -\mathbf{u}\cdot\mathbf{a}$,
    and therefore $\mathbf{a}'\mathbf{b}\mathbf{c}$ has a larger area than $\mathbf{a}\mathbf{b}\mathbf{c}$, so the latter cannot
    be anchored at $\mathbf{u}$.

    Let $y_\text{min} = \min\{\mathbf{x}\cdot\mathbf{u}:\mathbf{x}\in P\}$ and $y_\text{max} = \max\{\mathbf{x}\cdot\mathbf{u}:\mathbf{x}\in P\}$.
    Now, let $\mathbf{v}=(-u_2,u_1)$, where $\mathbf{u}=(u_1,u_2)$ and define the functions
    $f(y) = \max\{z: y\mathbf{u}+z\mathbf{v}\in P\}$ and $g(y) = -\min\{z: y\mathbf{u}+z\mathbf{v}\in P\}$ for $y\in[y_\text{min},y_\text{max}]$.
    If $\mathbf{a}\mathbf{b}\mathbf{c}$ is inscribed in $P$ with $\mathbf{u}$ as the outer normal to
    the edge $\mathbf{b}\mathbf{c}$, then $\mathbf{b}=y\mathbf{u}-g(y)\mathbf{v}$ and $\mathbf{c}=y\mathbf{u}+f(y)\mathbf{v}$ for some $y$
    or $\mathbf{b}=y_\text{max}\mathbf{u}-z'\mathbf{v}$ and $\mathbf{c}=y_\text{min}\mathbf{u}+z''\mathbf{v}$ with $z'<f(y_\text{max})$
    or $z''<g(y_\text{max})$.
    If the latter case holds, than $\mathbf{a}\mathbf{b}\mathbf{c}$ cannot be of maximal area, even locally.
    The functions $f(y)$ and $g(y)$ (and so also $f(y)+g(y)$) are convex and piecewise linear on $(y_\text{min},y_\text{max})$.
    It follows that $s(y)=\sqrt{\tfrac12(y-y_\text{min})[f(y)+g(y)]}$ is also convex and has a single maximum.
    Therefore, if $\mathbf{a}\mathbf{b}\mathbf{c}$ is of maximal area among triangles with $\mathbf{u}$ as the outer normal, then
    $\mathbf{b}=y^*\mathbf{u}-g(y^*)\mathbf{v}$ and $\mathbf{c}=y^*\mathbf{u}+f(y^*)\mathbf{v}$, where $y^*$ is the unique
    maximum of $s(y)$. Furthermore, there are no local maxima except for ones with those vertices $\mathbf{b}$ and $\mathbf{c}$ that
    achieve the global maximum.
\end{proof}

When $-\mathbf{u}$ is an outer normal of $P$ at a unique point (necessary a vertex of $P$), then the triangle anchored to $\mathbf{u}$ is unique.
When $-\mathbf{u}$ is an outer normal to $P$ along an edge, we will choose among the anchored triangles the one such the $\mathbf{a}$ is the vertex
at the counter-clockwise end of that edge. Therefore, we will always have a unique triangle in mind when we refer to \textit{the} triangle anchored to $\mathbf{u}$.
We denote the vertices of that triangle as functions of $\mathbf{u}$: $\mathbf{a}(\mathbf{u})$, $\mathbf{b}(\mathbf{u})$, and $\mathbf{c}(\mathbf{u})$.

\begin{lem}
    \begin{enumerate}
	\item $\mathbf{a}: S^1 \to \partial P$ is piecewise constant.
	\item $\mathbf{b}, \mathbf{c}: S^1 \to \partial P$ are continuous.
    \end{enumerate}
\end{lem}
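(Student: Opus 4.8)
\emph{Proof proposal.}\ For part (1), recall from Lemma~\ref{lem:uniq}(1) that $\mathbf{a}(\mathbf{u})$ is a point of $\partial P$ at which $-\mathbf{u}$ is an outer normal, i.e.\ a minimizer of $\mathbf{x}\mapsto\mathbf{u}\cdot\mathbf{x}$ over $P$. For a fixed vertex $w$ of $P$, the set of directions $\mathbf{u}$ for which $-\mathbf{u}$ is an outer normal at $w$ is a closed arc of $S^1$ (the negative of the normal cone at $w$), and these finitely many arcs cover $S^1$ with pairwise disjoint interiors; their common endpoints are exactly the directions $\mathbf{u}$ for which $-\mathbf{u}$ is normal to an edge of $P$. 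Thus on the interior of each such arc $\mathbf{a}$ is constantly the corresponding vertex, and at the finitely many endpoints the tie-breaking convention (the counter-clockwise endpoint of the edge) assigns a definite value. Hence $\mathbf{a}$ takes finitely many values, each on an arc, and is piecewise constant.

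For part (2) I would argue by sequences, using the uniqueness supplied by Lemma~\ref{lem:uniq}(2). Since $\partial P$ is compact, it suffices to show that whenever $\mathbf{u}_k\to\mathbf{u}$ with $\mathbf{b}(\mathbf{u}_k)\to\mathbf{b}^\ast$ and $\mathbf{c}(\mathbf{u}_k)\to\mathbf{c}^\ast$, one has $\mathbf{b}^\ast=\mathbf{b}(\mathbf{u})$ and $\mathbf{c}^\ast=\mathbf{c}(\mathbf{u})$; passing to a further subsequence we may also assume $\mathbf{a}(\mathbf{u}_k)\to\mathbf{a}^\ast$. Writing $M(\mathbf{u})$ for the (positive) area of the triangle anchored to $\mathbf{u}$, the limiting triangle $\mathbf{a}^\ast\mathbf{b}^\ast\mathbf{c}^\ast$ has area $\lim_k M(\mathbf{u}_k)$ and is admissible for $\mathbf{u}$: taking limits in $\mathbf{u}_k\cdot\mathbf{b}(\mathbf{u}_k)=\mathbf{u}_k\cdot\mathbf{c}(\mathbf{u}_k)$ gives $\mathbf{u}\cdot\mathbf{b}^\ast=\mathbf{u}\cdot\mathbf{c}^\ast$, while $\mathbf{u}\cdot\mathbf{a}^\ast=\lim_k y_\text{min}(\mathbf{u}_k)=y_\text{min}(\mathbf{u})$ by continuity of $y_\text{min}$, so $-\mathbf{u}$ is an outer normal at $\mathbf{a}^\ast$. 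Consequently $\mathrm{area}(\mathbf{a}^\ast\mathbf{b}^\ast\mathbf{c}^\ast)\le M(\mathbf{u})$ by the definition of $M$. If I can also show that $M$ is lower semicontinuous, then $\lim_k M(\mathbf{u}_k)\ge M(\mathbf{u})>0$, which forces equality and, in particular, $\mathbf{b}^\ast\ne\mathbf{c}^\ast$, so the limiting triangle is nondegenerate with base outer normal $\mathbf{u}$. Then $\mathbf{a}^\ast\mathbf{b}^\ast\mathbf{c}^\ast$ is anchored to $\mathbf{u}$, and uniqueness (Lemma~\ref{lem:uniq}(2)) yields $\mathbf{b}^\ast=\mathbf{b}(\mathbf{u})$ and $\mathbf{c}^\ast=\mathbf{c}(\mathbf{u})$, as required.

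The crux is therefore the lower semicontinuity of $M$ at an arbitrary direction $\mathbf{u}_0$, which I would prove by exhibiting, for $\mathbf{u}$ near $\mathbf{u}_0$, an admissible triangle whose area tends to $M(\mathbf{u}_0)$. Let $\mathbf{a}_0\mathbf{b}_0\mathbf{c}_0$ be anchored to $\mathbf{u}_0$, with base at $\mathbf{u}_0$-height $y_0^\ast$. When $y_0^\ast<y_\text{max}(\mathbf{u}_0)$ the base chord $\mathbf{b}_0\mathbf{c}_0$ is transverse to $\partial P$, and the chord of $P$ perpendicular to $\mathbf{u}$ at $\mathbf{u}$-height $y_0^\ast$ has endpoints depending continuously on $\mathbf{u}$ near $\mathbf{u}_0$; pairing it with the apex $\mathbf{a}(\mathbf{u})$, whose $\mathbf{u}$-height $y_\text{min}(\mathbf{u})$ is continuous even though $\mathbf{a}$ itself may jump, gives the desired triangles. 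The genuine obstacle is the degenerate case in which $\mathbf{u}_0$ is the outer normal to an edge of $P$ and the anchored base is that edge, $y_0^\ast=y_\text{max}(\mathbf{u}_0)$: here the chord length $f(y)+g(y)$ is discontinuous at the top as $\mathbf{u}$ is tilted, so one cannot simply take the limiting chord. Instead I would, for $\mathbf{u}$ tilted slightly so that one endpoint (say $\mathbf{b}_0$) of that edge becomes the $\mathbf{u}$-highest vertex, follow the chord perpendicular to $\mathbf{u}$ whose one endpoint slides along the edge from $\mathbf{b}_0$ down to $\mathbf{c}_0$ while its other endpoint descends the adjacent edge, and check by direct computation that, at the moment the first endpoint reaches $\mathbf{c}_0$, the chord has length tending to $|\mathbf{b}_0-\mathbf{c}_0|$ and $\mathbf{u}$-height tending to $y_\text{max}(\mathbf{u}_0)$, so the corresponding admissible triangle has area tending to $M(\mathbf{u}_0)$; the opposite tilt is symmetric. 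This establishes the lower semicontinuity needed above and completes the proof of continuity of $\mathbf{b}$ and $\mathbf{c}$.
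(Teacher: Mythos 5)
Your proposal is correct and follows essentially the same route as the paper: local constancy of the normal cone for part (1), and for part (2) a sequential compactness argument in which the limit triangle is shown admissible for $\mathbf{u}$ and a small deformation of the optimal triangle at $\mathbf{u}$ supplies nearby admissible triangles of nearly optimal area (the paper's one-line ``we could deform'' step is precisely your lower semicontinuity of $M$), after which uniqueness from Lemma~\ref{lem:uniq}(2) finishes the argument. You go beyond the paper only in spelling out that deformation explicitly, in particular the degenerate case where the anchored base is an entire edge of $P$, which the paper leaves implicit.
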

\begin{proof}
    \begin{enumerate}
	\item When $-\mathbf{u}$ is not an outer normal to $P$ along an edge, then any vector in a neighbor of $-\mathbf{u}$ is also normal to $P$ at $\mathbf{a}(\mathbf{u})$.
    Therefore, $\mathbf{a}(\mathbf{u})$ is locally constant. Since $P$ has finitely many edges, $\mathbf{a}(\mathbf{u})$ is locally constant
    at all but finitely many points.
	\item If $\mathbf{b}(\mathbf{u}_k)\to\mathbf{b}'$ and $\mathbf{c}(\mathbf{u}_k)\to\mathbf{c}'$ as $\mathbf{u}_k\to\mathbf{u}$, where
    $\mathrm{area}(\mathbf{a}\mathbf{b}'\mathbf{c}') < \mathrm{area}(\mathbf{a}(\mathbf{u})\mathbf{b}(\mathbf{u})\mathbf{c}(\mathbf{u}))$, then for sufficiently 
    large $k$, we could deform $\mathbf{a}(\mathbf{u})\mathbf{b}(\mathbf{u})\mathbf{c}(\mathbf{u})$ to create a triangle with $\mathbf{u}_k$ as an outer normal
    and area larger than $\mathbf{a}_k\mathbf{b}_k\mathbf{c}_k$. This is a contradiction, so
    $\mathrm{area}(\mathbf{a}\mathbf{b}'\mathbf{c}')\ge\mathrm{area}(\mathbf{a}\mathbf{b}'\mathbf{c}')$, and by uniqueness of the maximum,
    the functions are continuous.
    \end{enumerate}
\end{proof}

So, the anchored triangles of $P$ form a cyclic one-parameter family, in which two vertices move continuously and the third jumps
from one vertex to the next in cyclic order. If we could go through the anchored triangles in a systematic way, we could find the one that
maximizes the area, and we would be done. In the next section, we will show how this can be done, but we require some more preparation.

\begin{dfn}
    Let $P$ be a convex polygon, $\mathbf{u}\in S^1$ be a unit vector, and $\mathbf{a}\mathbf{b}\mathbf{c}$ a triangle inscribed in $P$.
    We say $\mathbf{a}\mathbf{b}\mathbf{c}$ is \textit{candidate-anchored} to $\mathbf{u}$ if it satisfies the conditions
    \begin{enumerate}
	\item $\mathbf{u}$ is an outer normal to the edge $\mathbf{b}\mathbf{c}$.
	\item $-\mathbf{u}$ is an outer normal to $P$ at $\mathbf{a}$.
    \end{enumerate}
\end{dfn}

For a point $\mathbf{x}\in\partial P$, define its forward counter-clockwise tangent $e_+(\mathbf{x})\in S^1$ to be
the direction from $\mathbf{x}$ to the vertex immediately counter-clockwise to $\mathbf{x}$.
Similarly, let $e_-(\mathbf{x})$, the backward counter-clockwise tangent, be the direction to $\mathbf{x}$ from the vertex immediately clockwise to $\mathbf{x}$.
The two are equal except at vertices of $P$.
Let $L_\pm(\mathbf{x})$ be the line $\{\mathbf{x}+s \mathbf{e}_\pm(\mathbf{x}): s\in\mathbb{R}\}$.
Let $\mathbf{a}\mathbf{b}\mathbf{c}$ be candidate-anchored to $\mathbf{u}$, and let $M$ be the line through $\mathbf{b}$ and $\mathbf{c}$.
Let $\mathbf{b}'_\pm(h)$ and $\mathbf{c}'_\pm(h)$ be the intersections of $M+\|\mathbf{c}-\mathbf{b}\|h\mathbf{u}$ with $L_\pm(\mathbf{b})$
and with $L_\pm(\mathbf{c})$, respectively. Then define 
$$
Q_{\pm\pm}(\mathbf{a},\mathbf{b},\mathbf{c}) = \left.\frac{d(\mathrm{area}(\mathbf{a}\mathbf{b}'_\pm(h)\mathbf{c}'_\pm(h))}{dh}\right|_{h=0}\text.
$$
Note that the intersection $\mathbf{b}'_\pm(h)$ does not exists if $e_\pm(\mathbf{b})$ is antiparallel or directly parallel to $\mathbf{c}-\mathbf{b}$.
Since the antiparallel case is impossible a limiting procedure suggests that we should take $Q_{\pm\pm}(\mathbf{a},\mathbf{b},\mathbf{c}) = -\infty$ in this case.
Similarly, if $e_\pm(\mathbf{c})$ is parallel to $\mathbf{c}-\mathbf{b}$ (again necessarily directly parallel),
we take $Q_{\pm\pm}(\mathbf{a},\mathbf{b},\mathbf{c}) = +\infty$. If both are true, then we leave $Q_{\pm\pm}$ undefined.

\begin{lem}
    Let $\mathbf{a}\mathbf{b}\mathbf{c}$ be candidate-anchored to $\mathbf{u}$.
    \begin{enumerate}
	\item $\mathbf{a}\mathbf{b}\mathbf{c}$ is anchored to $\mathbf{u}$ if and only if
	    $Q_{-+}(\mathbf{a},\mathbf{b},\mathbf{c})\ge 0$ and $Q_{+-}(\mathbf{a},\mathbf{b},\mathbf{c})\le0$.
	\item If neither $\mathbf{b}$ or $\mathbf{c}$ are vertices of $P$, then $\mathbf{a}\mathbf{b}\mathbf{c}$ is anchored
	    to $\mathbf{u}$ if and only if $Q_{++}(\mathbf{a},\mathbf{b},\mathbf{c})=0$.
	\item If $Q_{++}(\mathbf{a},\mathbf{b},\mathbf{c})=0$, then $\mathbf{a}\mathbf{b}\mathbf{c}$ is anchored to $\mathbf{u}$.
    \end{enumerate}
\end{lem}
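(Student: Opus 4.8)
The plan is to route all three parts through a single one-variable area function and the concavity from the proof of Lemma~\ref{lem:uniq}. Keep the notation $\mathbf v=(-u_2,u_1)$, $y_{\min}$, $f$, $g$ of that proof, and for a candidate-anchored $\mathbf a\mathbf b\mathbf c$ write $y=\mathbf b\cdot\mathbf u=\mathbf c\cdot\mathbf u$, $H=y-y_{\min}$, and $w=\|\mathbf c-\mathbf b\|=f(y)+g(y)$, so its area is $A(y)=\tfrac12 Hw=s(y)^2$. By Lemma~\ref{lem:uniq}(1), a maximum-area triangle with $\mathbf u$ normal to its base must have $-\mathbf u$ as an outer normal at its apex, hence is candidate-anchored and has area $\max_y A(y)$; therefore a candidate-anchored triangle at height $y$ is anchored if and only if $y$ maximizes $A$. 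Recall also from Lemma~\ref{lem:uniq} and its proof that $s=\sqrt A$ is concave with a unique maximizer $y^*$, and that the one-sided slopes obey $f'_+\le f'_-$ and $g'_+\le g'_-$.

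First I would compute $Q_{st}$ directly. Translating the base line by $wh\,\mathbf u$ raises the height by $\Delta=wh$ and, to first order, changes the base length linearly in $\Delta$; since area is half the product of height and base, differentiating and multiplying by $w$ gives $Q_{st}=\tfrac{w}{2}\bigl[w+H\,r_{st}\bigr]$, where $r_{st}$ is the rate of change of the base length when $\mathbf b$ slides along $L_s(\mathbf b)$ and $\mathbf c$ along $L_t(\mathbf c)$, equal to a sum of one-sided derivatives of $f$ and $g$. The two index pairs that actually track $\partial P$ are $Q_{+-}$ and $Q_{-+}$: as $y$ increases, $\mathbf b$ advances along its forward (counter-clockwise) edge while $\mathbf c$, on the opposite side, advances along its backward edge, and both contributions are the right derivatives $g'_+,f'_+$, so $Q_{+-}=w\,A'_+(y)$; as $y$ decreases the roles switch to the left derivatives and $Q_{-+}=w\,A'_-(y)$.

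Part (1) then follows immediately: since $s=\sqrt A$ is concave with unique maximizer $y^*$, the triangle is anchored iff $y=y^*$, which for such $s$ is equivalent to $A'_-(y)\ge 0\ge A'_+(y)$ (the one-sided derivatives of $A$ and of $s$ agree in sign where $A>0$), i.e. to $Q_{-+}\ge0$ and $Q_{+-}\le0$. For part (2), if neither $\mathbf b$ nor $\mathbf c$ is a vertex then $L_+=L_-$ at each, so all four $r_{st}$ coincide and all four $Q_{st}$ equal the two-sided $w\,A'(y)$; the conditions of (1) collapse to $Q_{++}=0$. For part (3), the monotonicity $f'_+\le f'_-$, $g'_+\le g'_-$ orders the base-growth rates as $r_{+-}\le r_{++}\le r_{-+}$ (the pair $(+,-)$ uses both right derivatives, the pair $(-,+)$ both left derivatives), hence $Q_{+-}\le Q_{++}\le Q_{-+}$; thus $Q_{++}=0$ forces $Q_{+-}\le0\le Q_{-+}$, and (1) yields that the triangle is anchored.

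The delicate point is the orientation bookkeeping of the second paragraph: because $\mathbf b$ and $\mathbf c$ lie on opposite sides of the base, ``forward/counter-clockwise'' means increasing height at $\mathbf b$ but decreasing height at $\mathbf c$, which is precisely why the raising derivative pairs a forward tangent at $\mathbf b$ with a backward tangent at $\mathbf c$; getting these pairings right is what identifies $Q_{+-}$ and $Q_{-+}$ with the one-sided derivatives of $A$. The only remaining care is at the degenerate configurations where a tangent is parallel to $\mathbf c-\mathbf b$ (an edge of $P$ is parallel to the base, which then lies on an extreme edge): there the conventions $Q=-\infty$ (at $\mathbf b$) and $Q=+\infty$ (at $\mathbf c$) are exactly the ones that extend the identities $Q_{+-}=wA'_+$, $Q_{-+}=wA'_-$ and the chain $Q_{+-}\le Q_{++}\le Q_{-+}$ to the extended reals, so all three implications persist verbatim.
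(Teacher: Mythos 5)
Your proof is correct and follows essentially the same route as the paper: you identify $Q_{-+}$ and $Q_{+-}$ with the one-sided derivatives of the area function $s(y)^2$ from the proof of Lemma~\ref{lem:uniq}, deduce (1) from the characterization of the maximizer of the concave $s$, get (2) because all four $Q$'s coincide off the vertices, and get (3) from the ordering $Q_{+-}\le Q_{++}\le Q_{-+}$, which is exactly the convexity-of-the-tangent-discontinuity fact the paper invokes. Your explicit formula $Q_{st}=\tfrac{w}{2}(w+Hr_{st})$ and the careful pairing of forward/backward tangents with right/left derivatives make the argument somewhat more detailed than the paper's, but not different in substance.
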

\begin{proof}
    For (1), recall the functions $f(y)$, $g(y)$, and $s(y)$ from the proof of Lemma \ref{lem:uniq}.
    If $\mathbf{b}=y\mathbf{u}-g(y)\mathbf{v}$ and $\mathbf{c}=y\mathbf{u}+f(y)\mathbf{v}$, then
    $Q_{-+}(\mathbf{a},\mathbf{b},\mathbf{c}) = (1/\|\mathbf{c}-\mathbf{b}\|)d(s^2)/dy|_{y^-}$ and
    $Q_{+-}(\mathbf{a},\mathbf{b},\mathbf{c}) = (1/\|\mathbf{c}-\mathbf{b}\|)d(s^2)/dy|_{y^+}$ are
    the directional derivatives of the area $s(y)^2$.
    The triangle is anchored to $\mathbf{u}$ if and only if $y$ is a maximum
    of $s(y)$, which completes the proof.

    (2) is a simple corollary of (1), since $Q_{++}(\mathbf{a},\mathbf{b},\mathbf{c})=Q_{-+}(\mathbf{a},\mathbf{b},\mathbf{c})=Q_{+-}(\mathbf{a},\mathbf{b},\mathbf{c})=0$
    when neither $\mathbf{b}$ or $\mathbf{c}$ are vertices of $P$.

    Finally, because the tangent discontinuities at vertices of $P$ can only be in the convex direction,
    $\mathrm{area}(\mathbf{a}\mathbf{b}'_-(h)\mathbf{c}'_+(h))\le\mathrm{area}(\mathbf{a}\mathbf{b}'_+(h)\mathbf{c}'_+(h))$ when $h<0$ and
    $\mathrm{area}(\mathbf{a}\mathbf{b}'_+(h)\mathbf{c}'_-(h))\le\mathrm{area}(\mathbf{a}\mathbf{b}'_+(h)\mathbf{c}'_+(h))$ when $h>0$.
    Therefore, if $Q_{++}(\mathbf{a},\mathbf{b},\mathbf{c})=0$, then $Q_{-+}(\mathbf{a},\mathbf{b},\mathbf{c})\ge 0$
    and $Q_{+-}(\mathbf{a},\mathbf{b},\mathbf{c})\le0$, and by (1), the triangle is anchored.
\end{proof}

\begin{thm}
    \label{thm:forward}
    Let $\mathbf{a}\mathbf{b}\mathbf{c}$ be anchored to $\mathbf{u}=(\cos\theta,\sin\theta)$ and suppose
    that if $-\mathbf{u}$ is an outer normal to an edge of $P$, that $\mathbf{a}$ is at the counter-clockwise end of that edge.
    There exists $\delta>0$ such that when $\mathbf{u}'= (\cos\theta',\sin\theta')$ and $\theta'\in[\theta,\theta+\delta]$,
    there is a triangle $\mathbf{a}'\mathbf{b}'\mathbf{c}'$ anchored to $\mathbf{u}'$ such that $\mathbf{a}'=\mathbf{a}$,
    $\mathbf{b}'$ is on the edge segment between $\mathbf{b}$ and the vertex immediately counter-clockwise to $\mathbf{b}$, and
    $\mathbf{c}'$ is on the edge segment between $\mathbf{c}$ and the vertex immediately counter-clockwise to $\mathbf{c}$.
\end{thm}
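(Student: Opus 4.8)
The plan is to use part (1) of Lemma \ref{lem:uniq} together with the piecewise-constancy of $\mathbf{a}(\cdot)$ to pin down the apex, and then to analyze the resulting one-parameter area maximization over the base. First I would dispose of the apex: since $\mathbf{a}$ is piecewise constant and, when $-\mathbf{u}$ is an edge normal, the tie-breaking convention places $\mathbf{a}$ at the counter-clockwise end of that edge, a short argument shows that for all sufficiently small $\delta>0$ and all $\theta'\in[\theta,\theta+\delta]$ the direction $-\mathbf{u}'$ lies in the normal cone of this same vertex, so $\mathbf{a}(\mathbf{u}')=\mathbf{a}$. By part (1) of Lemma \ref{lem:uniq} every anchored triangle at $\mathbf{u}'$ then has apex $\mathbf{a}$, so by uniqueness (part (2)) the anchored triangle at $\mathbf{u}'$ is obtained by maximizing the area over chords of $P$ perpendicular to $\mathbf{u}'$ with apex $\mathbf{a}$. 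By the continuity lemma the optimal chord for $\mathbf{u}'$ converges to $\mathbf{b}\mathbf{c}$ as $\theta'\to\theta$, and part (3) of Lemma \ref{lem:uniq} rules out competing global maxima, so it suffices to carry out this maximization locally and to show its endpoints move counter-clockwise (forward).

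The engine is a first-order computation. Writing $\mathbf{u}'=\mathbf{u}\cos\epsilon+\mathbf{v}\sin\epsilon$ with $\epsilon=\theta'-\theta\ge 0$, I would parametrize a competing chord by its endpoints $\mathbf{b}+t\,\mathbf{t}_b$ and $\mathbf{c}+s\,\mathbf{t}_c$, where $\mathbf{t}_b=e_+(\mathbf{b})$ and $\mathbf{t}_c=e_+(\mathbf{c})$ are the forward tangents and $t,s$ are signed displacements (positive meaning forward). The requirement that the chord be perpendicular to $\mathbf{u}'$ is linear, to first order $t\,(\mathbf{t}_b\cdot\mathbf{u})-s\,(\mathbf{t}_c\cdot\mathbf{u})=\|\mathbf{c}-\mathbf{b}\|\,\epsilon$, while twice the area $(\mathbf{b}+t\mathbf{t}_b-\mathbf{a})\times(\mathbf{c}+s\mathbf{t}_c-\mathbf{a})$ is bilinear in $(t,s)$. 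Imposing stationarity along the constraint (equivalently $Q_{++}=0$, by part (2) of the preceding lemma) and using that at $\theta$ the unperturbed configuration is already stationary, I expect the optimal displacements to come out as $t,s=\tfrac12\|\mathbf{c}-\mathbf{b}\|\,\epsilon/|\mathbf{t}_\bullet\cdot\mathbf{u}|>0$ for $\bullet\in\{b,c\}$ respectively, i.e.\ strictly forward and, tellingly, independent of the edge slopes; the second-order term reproduces the strict concavity of Lemma \ref{lem:uniq}, so this critical point is the maximum. Taking $\delta$ small keeps both displacements below the distance to the next vertex, uniformly on $[\theta,\theta+\delta]$, which yields the asserted edge containments.

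The hard part is the case where $\mathbf{b}$ or $\mathbf{c}$ is already a vertex of $P$ at $\theta$, since then the unperturbed configuration need not satisfy $Q_{++}=0$; the anchored condition is only the one-sided $Q_{-+}\ge 0\ge Q_{+-}$ of part (1). Here I would study the area as a piecewise-smooth function of the chord level, using the forward edge on one side of the vertex level and the backward edge on the other, and compute both one-sided derivatives at the level whose chord passes through the vertex. Using the inequalities $Q_{+-}\le Q_{++}\le Q_{-+}$ (which follow, as in the proof of part (3), from convexity of $P$ at its vertices) together with the same negative $O(\epsilon)$ correction found above, I expect to show that moving the vertex backward strictly decreases the area, while the forward branch has its maximum either strictly past the vertex or exactly at it; in either outcome the endpoint is non-backward, hence lies on the forward edge segment (possibly still at the vertex itself), as required. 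Finally I would treat the degenerate configurations in which a forward tangent is parallel to $\mathbf{c}-\mathbf{b}$ — where $\mathbf{t}_\bullet\cdot\mathbf{u}=0$ and the displacement formula breaks down — by appealing to the $Q_{\pm\pm}=\pm\infty$ conventions and continuity, the point being that such a direction only drives the endpoint forward faster, still within its edge for $\delta$ small. That the constructed forward configuration is genuinely anchored to $\mathbf{u}'$ is then immediate, since we have maximized the area over admissible chords (with part (3) providing an explicit certificate whenever the new endpoints are edge-interior).
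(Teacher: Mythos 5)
Your overall strategy --- fix the apex $\mathbf{a}$ via the normal cone and the tie-breaking convention, then perturb the stationarity condition for the base chord to first order in $\epsilon=\theta'-\theta$, treating separately the cases where $\mathbf{b}$ or $\mathbf{c}$ is a vertex --- is the same as the paper's, and your opening reduction is correct and makes explicit something the paper leaves implicit. The gap is in the ``engine.'' The claimed optimal displacements $t,s=\tfrac12\|\mathbf{c}-\mathbf{b}\|\,\epsilon/|\mathbf{t}_\bullet\cdot\mathbf{u}|$, ``independent of the edge slopes,'' are wrong. Concretely, put $\mathbf{a}=(0,0)$, $\mathbf{u}=(1,0)$, $L_+(\mathbf{b})=\{(x,-1+x)\}$, $L_+(\mathbf{c})=\{(x,2-x)\}$, so $\mathbf{b}=(3/4,-1/4)$ and $\mathbf{c}=(3/4,5/4)$; solving the perturbed stationarity and perpendicularity equations to first order gives height displacements $+\epsilon/2$ at $\mathbf{b}$ and $-\epsilon$ at $\mathbf{c}$, i.e.\ $t_b:t_c=1:2$, not $1:1$. (They sum to $\|\mathbf{c}-\mathbf{b}\|\epsilon=\tfrac32\epsilon$ as your linearized constraint demands, but the constraint alone does not force an even split; the stationarity equation determines the split and it depends on the geometry.) The correct first-order ratio is $t_b:t_c=\gamma:\beta=(-r_b):r_c$, where $r_b,r_c$ are the intercepts of the two support lines with the line through $\mathbf{a}$ perpendicular to the chord direction. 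This is not a cosmetic error: the entire content of the theorem is that both displacements are nonnegative (forward), and with the correct formula this is precisely the statement $r_b\le 0\le r_c$, which holds because $\mathbf{a}\in P$ lies weakly between the two support lines --- a geometric fact your argument never has to confront because your formula makes positivity automatic. The degenerate situations $r_b=0$ or $r_c=0$ (one endpoint stationary to first order) likewise have to be handled through the full relation $\beta t_b-\gamma t_c+\alpha t_b t_c=0$ rather than its linearization.

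The vertex case is also underpowered as sketched. When, say, $Q_{++}<0$ so that $\mathbf{b}$ is a vertex, the only delicate subcase is $Q_{-+}(\mathbf{a},\mathbf{b},\mathbf{c})=0$: there the area is \emph{not} strictly decreasing to first order as the chord level moves backward through $\mathbf{b}$, so your claim that ``moving the vertex backward strictly decreases the area'' fails exactly in the one subcase that requires work. What must be shown is that $Q_{-+}$ remains nonnegative as $\mathbf{c}$ advances, which the paper establishes by computing $dQ_{-+}/dt_c=4(m_b-m_c)(-r_b)/(r_b-r_c)^2\ge 0$ along the motion. Your sketch cites the right inequalities $Q_{+-}\le Q_{++}\le Q_{-+}$ but, as written, does not contain the step that closes this marginal subcase.
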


\begin{proof}
    We will analyze three cases, $Q_{++}(\mathbf{a},\mathbf{b},\mathbf{c})=0$, $Q_{++}(\mathbf{a},\mathbf{b},\mathbf{c})<0$, and $Q_{++}(\mathbf{a},\mathbf{b},\mathbf{c})>0$
    (since $\mathbf{a}\mathbf{b}\mathbf{c}$ is anchored, $\mathbf{e}_+(\mathbf{c})$ is not parallel to $\mathbf{c}-\mathbf{b}$ so $Q_{++}(\mathbf{a},\mathbf{b},\mathbf{c})$
    is not undefined).

    In the first case, $Q_{++}(\mathbf{a},\mathbf{b},\mathbf{c})=0$,
    let $\mathbf{b}'(t_b)=\mathbf{b}+t_b\mathbf{e}_+(\mathbf{b})$,
    let $\mathbf{c}'(t_c)=\mathbf{c}+t_c\mathbf{e}_+(\mathbf{c})$,
    and consider the triangle $\mathbf{a}\mathbf{b}'(t_b)\mathbf{c}'(t_c)$.
    Let $\mathbf{u}'(t_b,t_c)$ be perpendicular to $\mathbf{c}'(t_c)-\mathbf{b}'(t_b)$,
    so the triangle is candidate-anchored to $\mathbf{u}'(t_b,t_v)$.
    For small enough nonnegative $t_b$ and $t_c$, $\mathbf{e}_+(\mathbf{b}'(t_b))=\mathbf{e}_+(\mathbf{b})$
    and $\mathbf{e}_+(\mathbf{c}'(t_c))=\mathbf{e}_+(\mathbf{c})$, and
    the equation $Q_{++}(\mathbf{a},\mathbf{b}'(t_b),\mathbf{c}'(t_c))=0$ gives an implicit relation
    between $t_b$ and $t_c$ of the form $\beta t_b - \gamma t_c + \alpha t_b t_c=0$,
    where $\beta=\mathbf{e}_+(\mathbf{b})\wedge(\mathbf{c}-\tfrac12\mathbf{a}-\tfrac12\mathbf{b})$,
    $\gamma=\mathbf{e}_+(\mathbf{c})\wedge(\mathbf{b}-\tfrac12\mathbf{a}-\tfrac12\mathbf{c})$,
    $\alpha=\mathbf{e}_+(\mathbf{b})\wedge\mathbf{e}_+(\mathbf{c})$, and
    $\wedge$ denotes the wedge product $(x_1,y_1)\wedge(x_2,y_2) = x_1y_2 - y_1x_2$.
    We now claim that $\beta\ge0$, $\gamma\ge0$, and at least one of them is positive.
    To show this most easily, we translate and rotate $P$ so that $\mathbf{a}=(0,0)$ and $\mathbf{u}=(1,0)$.
    Since $Q_{++}(\mathbf{a},\mathbf{b},\mathbf{c})=0$ we know neither $L_+(\mathbf{b})$ or $L_+(\mathbf{c})$ are
    vertical, so we can parameterize them as $\{(x,r_b + m_b x)\}$ and $\{(x,r_c + m_c x)\}$ respectively.
    The equations $(\mathbf{c}-\mathbf{b})\cdot\mathbf{u}=0$ and $Q_{++}(\mathbf{a},\mathbf{b},\mathbf{c})=0$
    together with $\mathbf{b}\in L_+(\mathbf{b})$ and $\mathbf{c}\in L_+(\mathbf{c})$, determine
    $\mathbf{b}$ and $\mathbf{c}$ to have $x=(r_c-r_b)/2(m_b-m_c)$. We now immediately get
    $\beta = r_c$ and $\gamma = -r_b$. Since $\mathbf{a}=(0,0)$ is contained in $P$, the support lines,
    $L_+(\mathbf{b})$ and $L_+(\mathbf{c})$ must intersect the $y$-axis below and above it respectively,
    so $r_b\le0$ and $r_c\ge0$. If both lines intersect at $\mathbf{a}$, then we have $x=0$ for $\mathbf{b}$ and
    $\mathbf{c}$, which is impossible. 
    Therefore, we can invert the implicit relations
    $\beta t_b - \gamma t_c + \alpha t_b t_c=0$
    and $(\mathbf{c}'(t_c)-\mathbf{b}'(t_b))\cdot\mathbf{u}'=0$ to obtain functions
    $t_b(\mathbf{u}')$ and $t_c(\mathbf{u}')$ defined in the neighborhood of $\mathbf{u}$ and in particular,
    for $\mathbf{u}'= (\cos\theta',\sin\theta')$, $\theta'\in[\theta,\theta+\delta]$.
    The triangles $\mathbf{a}\mathbf{b}'(t_b(\mathbf{u}'))\mathbf{c}'(t_c(\mathbf{u}'))$ have $Q_{++}=0$,
    and are therefore anchored to $\mathbf{u}'$.

    In the second case, $Q_{++}(\mathbf{a},\mathbf{b},\mathbf{c})<0$.
    Since $\mathbf{a}\mathbf{b}\mathbf{c}$ is anchored, we know that $\mathbf{b}$ is a vertex
    (or else $Q_{-+}(\mathbf{a},\mathbf{b},\mathbf{c})=Q_{++}(\mathbf{a},\mathbf{b},\mathbf{c})<0$).
    Let $\mathbf{c}'(t_c)=\mathbf{c}+t_c\mathbf{e}_+(\mathbf{c})$.
    Since $Q_{++}(\mathbf{a},\mathbf{b},\mathbf{c}'(t_c))$ evolves continuously and $Q_{++}(\mathbf{a},\mathbf{b},\mathbf{c}'(0))$ is negative,
    we have for small enough positive $t_c$ that $Q_{+-}(\mathbf{a},\mathbf{b},\mathbf{c}'(t_c)) = Q_{+-}(\mathbf{a},\mathbf{b},\mathbf{c}'(t_c))$ is also
    negative.
    If $Q_{-+}(\mathbf{a},\mathbf{b},\mathbf{c}'(0))>0$, then the same argument gives $Q_{-+}(\mathbf{a},\mathbf{b},\mathbf{c}'(t_c))>0$ for small enough $t_c$,
    and the triangle $\mathbf{a}\mathbf{b}\mathbf{c}'(t_c)$ is anchored.
    Therefore, the only problematic case is when $Q_{-+}(\mathbf{a},\mathbf{b},\mathbf{c}'(0))=0$.
    Again, for ease of analysis, we translate and rotate $P$ so that $\mathbf{a}=(0,0)$ and $\mathbf{u}=(1,0)$.
    Since $Q_{-+}(\mathbf{a},\mathbf{b},\mathbf{c})=0$ we know neither $L_-(\mathbf{b})$ or $L_+(\mathbf{c})$ are
    vertical, so we can parameterize them as $\{(x,r_b + m_b x)\}$ and $\{(x,r_c + m_c x)\}$ respectively,
    and we can solve for $\mathbf{b}$ and $\mathbf{c}$ on those lines from $Q_{-+}(\mathbf{a},\mathbf{b},\mathbf{c})=0$ and $(\mathbf{c}-\mathbf{b})\cdot\mathbf{u}=0$.
    We find that $dQ_{-+}/dt_c = 4(m_b-m_c)(-r_b)/(r_b-r_c)^2>0$.
    Therefore, $Q_{-+}(\mathbf{a},\mathbf{b},\mathbf{c}'(t_c))\ge0$ for small enough $t_c$ in this case too.
    If we let $t_c(\mathbf{u}')$ be the solution of $(\mathbf{c}'(t_c)-\mathbf{b})\cdot\mathbf{u}'=0$, we obtain the triangle
    $\mathbf{a}\mathbf{b}\mathbf{c}'(t_c(\mathbf{u}'))$, which we have shown to be anchored to $\mathbf{u}'= (\cos\theta',\sin\theta')$,
    when $\theta'\in[\theta,\theta+\delta]$ and $\delta$ is sufficiently small.

    The final case, $Q_{++}(\mathbf{a},\mathbf{b},\mathbf{c})>0$ is symmetric to the previous case:
    necessarily $\mathbf{c}$ is a vertex, and we consider the triangle $\mathbf{a}\mathbf{b}'(t_b(\mathbf{u}'))\mathbf{c}$,
    where $\mathbf{b}'(t_b)=\mathbf{b}+t_b\mathbf{e}_+(\mathbf{b})$. By a similar analysis as in the previous case,
    we conclude that this triangle is anchored for small enough positive $t_b$.
\end{proof}

The observation that the vertices of the triangle anchored to $\mathbf{u}$ move monotonically in a counter-clockwise way as $\mathbf{u}$
moves in a counter-clockwise way is crucial in demonstrating that the algorithm presented in the next section takes time linear in the number of vertices.

\section{Algorithmic implementation}

We first show that given a vector $\mathbf{u}$ and a polygon $P$ with $n$ vertices, it is possible to obtain
a triangle inscribed in $P$ anchored to $\mathbf{u}$ in linear time.
Start by setting $\mathbf{a}$ to the vertex $\mathbf{p}$ of $P$ minimizing $\mathbf{u}\cdot\mathbf{p}$,
and set both $\mathbf{b}$ and $\mathbf{c}$ to the vertex maximizing $\mathbf{u}\cdot\mathbf{p}$.
We evolve $\mathbf{b}$ to $\mathbf{b}'=\mathbf{b}-t_b\mathbf{e}_-(\mathbf{b})$ and $\mathbf{c}$ to $\mathbf{c}'=\mathbf{c}+t_c\mathbf{e}_+(\mathbf{c})$,
maintaining $(\mathbf{c}'-\mathbf{b}')\cdot\mathbf{u} = 0$ and stopping when the first of the following conditions arises:
\begin{enumerate}
    \item $\mathbf{b}'$ reaches the vertex immediately clockwise to $\mathbf{b}$,
    \item $\mathbf{c}'$ reaches the vertex immediately counter-clockwise to $\mathbf{c}$, or
    \item $Q_{-+}(\mathbf{a},\mathbf{b}',\mathbf{c}')$ reaches $0$.
\end{enumerate}
We repeat this iterative evolution until we have $Q_{-+}(\mathbf{a},\mathbf{b},\mathbf{c})\ge0$ and $Q_{+-}(\mathbf{a},\mathbf{b},\mathbf{c})\ge0$.

To find the maximal inscribed triangle in $P$, we start with a triangle anchored to some arbitrary $\mathbf{u}$.
We then determine which of the cases of the proof of Theorem \ref{thm:forward}
the triangle belongs to (shift $\mathbf{a}$ to the next vertex in the counterclockwise direction if needed to satisfy the hypothesis of the theorem),
and obtain the formula for evolving $\mathbf{b}'=\mathbf{b}+t_b\mathbf{e}_+(\mathbf{b})$ and
$\mathbf{c}'=\mathbf{c}+t_c\mathbf{e}_+(\mathbf{c})$ such that the triangle $\mathbf{a}\mathbf{b}'\mathbf{c}'$ is anchored;
we evolve forward to the point where the first of the following stopping conditions arises:
\begin{enumerate}
    \item $\mathbf{b}'$ reaches the vertex immediately counter-clockwise to $\mathbf{b}$,
    \item $\mathbf{c}'$ reaches the vertex immediately counter-clockwise to $\mathbf{c}$,
    \item $\mathbf{c}'-\mathbf{b}'$ becomes parallel to $\mathbf{e}_+(\mathbf{a})$, or
    \item $Q_{++}(\mathbf{a},\mathbf{b}',\mathbf{c}')$ reaches $0$ when it was previously not $0$.
\end{enumerate}
We continue this iterative evolution until we have evolved through all the anchored triangles of $P$.
Whenever the vertices of the triangle all coincide with vertices of $P$, we record the area of the triangle, and
store for output the triangle with the largest area.

We give a pseudocode representation of the algorithm for finding an anchored triangle in Listing \ref{alg:anchored} and for finding the largest area
inscribe triangle in Listing \ref{alg:max_inscribed}. Some cumbersome formulas were moved to Listing \ref{alg:util} for neatness.
A C++ implementation of the algorithm is available alongside the source code of this eprint, and also available at \url{https://github.com/ykallus/max-triangle/releases/tag/v1.0}.

\begin{thm} On the input $P=(\mathbf{p}_1,\ldots,\mathbf{p}_n)$ representing a convex $n$-gon,
    the algorithm represented in Listing \ref{alg:max_inscribed}, outputs the vertex indices of a triangle inscribed in $P$ that
    has the maximal area among all inscribed triangles. The algorithm halts after $O(n)$ iterations.
\end{thm}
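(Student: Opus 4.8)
The statement has two halves: the algorithm is correct, and it performs $O(n)$ iterations. I would treat them in turn, with the monotonicity consequence of Theorem \ref{thm:forward} doing the heavy lifting for the iteration count.

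\emph{Correctness.} Recall the observation made just after the definition of anchoring, that the maximum-area inscribed triangle is anchored to all three of its edge normals, so each of its vertices is a vertex of $P$. It thus suffices to show that the main loop sweeps every anchored triangle of $P$, and that the optimal one is recorded. For the sweep I would run a continuation argument on the angle $\theta$: beginning from the initial anchored triangle at some $\theta_0$, Theorem \ref{thm:forward} says the set of angles whose anchored triangle has been reached is relatively open on the right, while the continuity of $\mathbf{b}$ and $\mathbf{c}$ established above makes it closed; hence the loop covers all of $[\theta_0,\theta_0+2\pi]$ before terminating. Now let $\mathbf{u}^\ast$ be the outer normal to the edge $\mathbf{b}\mathbf{c}$ of the optimal triangle, with associated angle $\theta^\ast$. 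At $\theta=\theta^\ast$ all three of $\mathbf{a},\mathbf{b},\mathbf{c}$ sit at vertices of $P$; since $\mathbf{b}$ and $\mathbf{c}$ both lying at vertices forces the edge $\mathbf{b}\mathbf{c}$, and hence $\mathbf{u}$, to be fixed, this full-vertex configuration is isolated in $\theta$, so immediately below $\theta^\ast$ at least one of $\mathbf{b},\mathbf{c}$ is not yet at its vertex and reaches it exactly at $\theta^\ast$. Therefore $\theta^\ast$ triggers a stopping condition (condition 1 or 2), the algorithm tests the triangle there, finds all three vertices coincident with vertices of $P$, and records its area. Since the output is the triangle of largest recorded area, it is a maximum-area triangle.

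\emph{Iteration count.} I would charge each iteration to the stopping condition that ends it. By Theorem \ref{thm:forward} the three vertices advance monotonically counter-clockwise, and over one full revolution each traverses $\partial P$ exactly once: $\mathbf{a}$ because $-\mathbf{u}$ winds once, and $\mathbf{b},\mathbf{c}$ because they are pinned in counter-clockwise order around $\mathbf{a}$ and cannot overtake one another. Consequently condition 1 fires once as $\mathbf{b}$ passes each of the $n$ vertices, condition 2 once as $\mathbf{c}$ passes each vertex, and condition 3 once for each of the $n$ vertices to which $\mathbf{a}$ jumps; these three classes contribute at most $3n$ iterations in total.

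The remaining, and I expect principal, obstacle is bounding the condition-4 (mode-change) events, since a naive estimate would permit $Q_{++}$ to recross zero arbitrarily often within a single edge configuration. I would avoid any delicate monotonicity calculation for $Q_{++}$ and argue instead by adjacency. When condition 4 fires, $Q_{++}$ has just reached $0$, so the algorithm re-enters case 1 of Theorem \ref{thm:forward}; there the implicit relation $\beta t_b - \gamma t_c + \alpha t_b t_c = 0$ keeps $Q_{++}$ identically zero as both $\mathbf{b}$ and $\mathbf{c}$ move, so condition 4 --- which fires only when $Q_{++}$ becomes $0$ after having been nonzero --- cannot fire again until the triangle next leaves this mode, which happens only through condition 1, 2, or 3. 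Hence no two condition-4 events are consecutive, giving at most $3n+1$ of them. Altogether the loop runs at most $6n+1$ times, each iteration determining the current case, computing the evolution, and locating the next stopping event in $O(1)$ work; prepended with the $O(n)$ computation of the initial anchored triangle (Listing \ref{alg:anchored}, itself a monotone sweep), the algorithm halts after $O(n)$ iterations.
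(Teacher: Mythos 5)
Your proposal is correct and follows essentially the same route as the paper's proof: correctness via the continuous sweep through all anchored triangles (so the optimal, all-vertex triangle is examined at the top of some iteration), and the $O(n)$ bound by charging iterations to stopping conditions, with conditions 1--3 bounded by the monotone counter-clockwise motion of the three vertices and condition 4 bounded by the observation that $Q$ cannot become zero on two consecutive iterations. Your write-up merely supplies more detail (the open-closed continuation argument and the isolation of the all-vertex configuration) than the paper's terser version.
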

\begin{proof}
    The triangle represented by the algorithm evolves continuously through all the anchored triangles of $P$, and any new vertex reached
    triggers a new iteration of the main loop. Therefore, the maximum area inscribed triangle is encountered by the algorithm at the
    top of the loop on some iteration, and its vertex indices will be recorded and returned.

    The initial task of finding an anchored triangle takes $O(n)$ iterations because at each iteration of \textsc{anchored\_triangle},
    either $\mathbf{b}$ or $\mathbf{c}$ visits a new vertex of $P$, or the algorithm halts. Since $\mathbf{b}$ and $\mathbf{c}$ never
    visit the same vertex more than once, this task is completed in $O(n)$ iterations.

    To see that \textsc{largest\_inscribed\_triangle} goes through $O(n)$ iterations, note that on each iteration either one of the three triangle vertices
    reaches some new vertex of $P$ or $Q$ switches from being zero to being nonzero. Since
    $\mathbf{a}$ visits all $n$ vertices in order, returning to the one it started on, and since $\mathbf{b}$ and $\mathbf{c}$ only
    move counter-clockwise and never cross each other or $\mathbf{a}$, it follows that each of them can reach a new vertex at most
    $2n$ times. Since $Q$ cannot switch from being zero to being nonzero on two consecutive iterations,
    the algorithm halts after $O(n)$ iterations.
\end{proof}

\begin{listing}
\begin{mdframed}[backgroundcolor=codebgcolor] 
\begin{algorithmic}
    \Function{anchored\_triangle}{$(\mathbf{p}_1,\ldots,\mathbf{p}_n)$,$\mathbf{u}$}
    \State $i_a\gets\arg\min_{i=1,\ldots,n} \mathbf{u}\cdot\mathbf{p}_i$
    \State $i_c\gets\arg\max_{i=1,\ldots,n} \mathbf{u}\cdot\mathbf{p}_i$
    \State $s_c\gets0$, $i_b\gets i_b-1$, $s_b\gets 1$
    \State $\mathbf{a}\gets \mathbf{p}_{i_a}$
    \Loop
    \State \textbf{if} $s_b=0$ \textbf{then} $i_b\gets i_b-1$, $s_b\gets 1$
    \State \textbf{if} $s_c=1$ \textbf{then} $i_c\gets i_c+1$, $s_c\gets 0$
    \State $\mathbf{b}\gets \mathbf{p}_{i_b}(1-s_b) + \mathbf{p}_{i_b+1}s_b$, $\mathbf{c}\gets \mathbf{p}_{i_c}(1-s_c) + \mathbf{p}_{i_c+1}s_c$
    \State $\mathbf{e}_b\gets \mathbf{p}_{i_b+1} - \mathbf{p}_{i_b}$, $\mathbf{e}_c\gets \mathbf{p}_{i_c+1} - \mathbf{p}_{i_c}$
    \If{$\mathbf{u}\cdot\mathbf{e}_b=0$}
    \State $i_b\gets i_b-1$, $s_b\gets 1$
    \State \textbf{go to top of loop}
    \EndIf
    \If{$\mathbf{u}\cdot\mathbf{e}_c=0$}
    \State $i_c\gets i_c+1$, $s_c\gets 0$
    \State \textbf{go to top of loop}
    \EndIf
    \State \textbf{if} $\mathbf{e}_b\wedge\mathbf{e}_c \le 0$ \textbf{then break out of loop}
    \State $t_q\gets$\funcname{calculate\_tq}($\mathbf{a},\mathbf{b},\mathbf{e}_b,\mathbf{c},\mathbf{e}_c$,$\mathbf{u}$)
    \State $t_b\gets-t_q/(\mathbf{u}\cdot\mathbf{e}_b)$, $t_c\gets t_q/(\mathbf{u}\cdot\mathbf{e}_c)$
    \State \textbf{if} $t_b\le0$ or $t_c\le0$ \textbf{then break out of loop}
    \If{$t_b<s_b$ and $t_c<1-s_c$}
    \State $s_b\gets s_b-t_b$, $s_c\gets s_c+t_c$
    \State \textbf{break out of loop}
    \ElsIf{$(1-t_c)t_b < s_b t_b$}
    \State $s_{b}\gets t_b - (1-s_c)t_b/t_c$
    \State $i_c \gets i_c+1$, $s_c \gets 0$
    \Else
    \State $s_{c}\gets t_c + s_b t_c/t_b$
    \State $i_b \gets i_b-1$, $s_b \gets 1$
    \EndIf
    \EndLoop
    \State\textbf{return} $i_a,i_b,s_b,i_c,s_c$
    \EndFunction
\end{algorithmic}
\end{mdframed}
\caption{An $O(n)$-time algorithm to find a triangle in an $n$-gon $P$ anchored to $\mathbf{u}$.}
\label{alg:anchored}
\end{listing}

\begin{listing}
\begin{mdframed}[backgroundcolor=codebgcolor] 
\begin{algorithmic}
    \Function{largest\_inscribed\_triangle}{$(\mathbf{p}_1,\ldots,\mathbf{p}_n)$}
    \State $i_a,i_b,s_b,i_c,s_c \gets$ \funcname{anchored\_triangle}($(\mathbf{p}_1,\ldots,\mathbf{p}_n)$,$(1,0)$)
    \State $i_a^{(\text{init})}\gets i_a$, $A_\text{max} \gets 0$
    \While{$i_a\le i_a^{(\text{init})}+n$}
    \State \textbf{if} $s_b=1$ \textbf{then} $i_b\gets i_b+1$, $s_b\gets 0$
    \State \textbf{if} $s_c=1$ \textbf{then} $i_c\gets i_c+1$, $s_c\gets 0$
    \State $\mathbf{a}\gets \mathbf{p}_{i_a}$, $\mathbf{b}\gets \mathbf{p}_{i_b}(1-s_b) + \mathbf{p}_{i_b+1}s_b$, $\mathbf{c}\gets \mathbf{p}_{i_c}(1-s_c) + \mathbf{p}_{i_c+1}s_c$
    \State $\mathbf{e}_a\gets \mathbf{p}_{i_a+1} - \mathbf{p}_{i_a}$, $\mathbf{e}_b\gets \mathbf{p}_{i_b+1} - \mathbf{p}_{i_b}$, $\mathbf{e}_c\gets \mathbf{p}_{i_c+1} - \mathbf{p}_{i_c}$
    \If{$\mathrm{area}(\mathbf{a}\mathbf{b}\mathbf{c}>A_\text{max})$ and $s_b=0$ and $s_c=0$}
    \State $A_\text{max}\gets\mathrm{area}(\mathbf{a}\mathbf{b}\mathbf{c})$, $i_a^{(\text{max})}\gets i_a$, $i_b^{(\text{max})}\gets i_b$, $i_c^{(\text{max})}\gets i_c$
    \EndIf
    \State \textbf{if} $\mathbf{e}_a\wedge(\mathbf{c}-\mathbf{b})=0$ \textbf{then} $i_a\gets i_a+1$, $\mathbf{a}\gets \mathbf{p}_{i_a}$
    \State $Q\gets$\funcname{calculate\_Q}($\mathbf{a},\mathbf{b},\mathbf{e}_b,\mathbf{c},\mathbf{e}_c$)
    \If{$Q>0$}
    \State $t_{b,1}\gets1-s_b$
    \State $t_{b,3}\gets(\mathbf{e}_a\wedge(\mathbf{c}-\mathbf{b}))/(\mathbf{e}_a\wedge\mathbf{e}_b)$
    \State $t_{b,4}\gets$\funcname{calculate\_tb4}($\mathbf{a},\mathbf{b},\mathbf{e}_b,\mathbf{c},\mathbf{e}_c$)
    \State $t_b\gets$smallest positive value among $t_{b,1}$, $t_{b,3}$, and $t_{b,4}$.
    \State $s_b\gets s_b+t_b$
    \ElsIf{$Q<0$}
    \State $t_{c,2}\gets1-s_c$
    \State $t_{c,3}\gets(\mathbf{e}_a\wedge(\mathbf{b}-\mathbf{c}))/(\mathbf{e}_a\wedge\mathbf{e}_c)$
    \State $t_{c,4}\gets$\funcname{calculate\_tc4}($\mathbf{a},\mathbf{b},\mathbf{e}_b,\mathbf{c},\mathbf{e}_c$)
    \State $t_c\gets$smallest positive value among $t_{c,2}$, $t_{c,3}$, and $t_{c,4}$.
    \State $s_c\gets s_c+t_c$
    \ElsIf{$Q=0$}
    \State $\alpha,\beta,\gamma\gets$\funcname{calculate\_alpha\_beta\_gamma}($\mathbf{a},\mathbf{b},\mathbf{e}_b,\mathbf{c},\mathbf{e}_c$)
    \State $t_{b,1}\gets1-s_b$, $t_{c,1}\gets\beta t_{b,1}/(\gamma-\alpha t_{b,1})$
    \State $t_{c,2}\gets1-s_c$, $t_{b,2}\gets\gamma t_{c,2}/(\beta+\alpha t_{c,2})$
    \State $t_{b,3},t_{c,3}\gets$\funcname{calculate\_tb3\_tc3}($\mathbf{a},\mathbf{e}_a,\mathbf{b},\mathbf{e}_b,\mathbf{c},\mathbf{e}_c$)
    \State $t_b,t_c\gets$ pair with smallest $t_b$ among the pairs $(t_{b,1},t_{c,1})$, $(t_{b,2},t_{c,2})$, and $(t_{b,3},t_{c,3})$ with $t_b\ge0$ and $t_c\ge0$.
    \State $s_b\gets s_b+t_b$, $s_c\gets s_c+t_c$
    \EndIf
    \EndWhile
    \State\textbf{return} $i_a^{(\text{max})}$, $i_b^{(\text{max})}$, $i_c^{(\text{max})}$
    \EndFunction
\end{algorithmic}
\end{mdframed}
\caption{An $O(n)$-time algorithm to find a triangle inscribed in an $n$-gon $P$ of maximal area.}
\label{alg:max_inscribed}
\end{listing}

\begin{listing}
\begin{mdframed}[backgroundcolor=codebgcolor] 
\begin{algorithmic}
    \Function{calculate\_Q}{$\mathbf{a},\mathbf{b},\mathbf{v},\mathbf{c},\mathbf{w}$}
    \State \textbf{return} $(c_2-b_2)[(b_2-c_2)v_1w_1 - (a_1-c_1)v_2w_1 + (a_1-b_1)v_1w_2] + (c_1-b_1)[(b_1-c_1)v_2w_2 - (a_2-c_2)v_1w_2 + (a_2-b_2)v_2w_1]$
    \EndFunction
    \Function{calculate\_tb4}{$\mathbf{a},\mathbf{b},\mathbf{v},\mathbf{c},\mathbf{w}$}
    \State \textbf{return} $\{v_1w_1(b_2-c_2)(b_2-c_2)+v_2w_2(b_1-c_1)(b_1-c_1)+v_1w_2[ (a_1-b_1)b_2 + (c_1-b_1)a_2 + (2b_1-a_1-c_1)c_2]+v_2w_1] (b_1-c_1)a_2 - (c_1-a_1)c_2 + (2c_1-a_1-b_1)b_2]\}/[(v_1w_2 - v_2w_1)(v_1(a_2 + b_2 - 2c_2) - v_2(a_1 + b_1 - 2c_1))]$
    \EndFunction
    \Function{calculate\_tc4}{$\mathbf{a},\mathbf{b},\mathbf{v},\mathbf{c},\mathbf{w}$}
    \State \textbf{return} $-\{v_1w_1(b_2-c_2)(b_2-c_2)+v_2w_2(b_1-c_1)(b_1-c_1)+v_1w_2( (a_1-b_1)b_2 + (c_1-b_1)a_2 + (2b_1-a_1-c_1)c_2)+v_2w_1( (b_1-c_1)a_2 - (c_1-a_1)c_2 + (2c_1-a_1-b_1)b_2)\}/[(v_1w_2 - v_2w_1)(w_1(a_2 + c_2 - 2b_2) - w_2(a_1 + c_1 - 2b_1))]$
    \EndFunction
    \Function{calculate\_alpha\_beta\_gamma}{$\mathbf{a},\mathbf{b},\mathbf{v},\mathbf{c},\mathbf{w}$}
    \State \textbf{return} $2(v_1w_2 - w_1v_2)$, $v_1(2c_2-a_2-b_2) - v_2(2c_1-b_1-a_1)$, $w_1(2b_2-a_2-c_2) - w_2(2b_1-c_1-a_1)$
    \EndFunction
    \Function{calculate\_tb3\_tc3}{$\mathbf{a},\mathbf{e},\mathbf{b},\mathbf{v},\mathbf{c},\mathbf{w}$}
    \State $t_{b,3}\gets \{v_1w_1e_2(b_2-c_2)+v_2w_2e_1(b_1-c_1)+v_1w_2(e_1(b_2-a_2) - e_2(2b_1-a_1-c_1))+v_2w_1(e_2(b_1-a_1) - e_1(2b_2-a_2-c_2))\}/[2(v_1w_2 - w_1v_2)(e_2v_1 - e_1v_2)]$
    \State $t_{c,3}\gets \{v_1w_1e_2(b_2-c_2)+v_2w_2e_1(b_1-c_1)+v_2w_1(-e_1(c_2-a_2) + e_2(2c_1-a_1-b_1))+v_1w_2(-e_2(c_1-a_1) + e_1(2c_2-a_2-b_2))\}/[2(v_1w_2 - w_1v_2)(e_2w_1 - e_1w_2)]$
    \State \textbf{return} $t_{b,3}$,$t_{c,3}$
    \EndFunction
    \Function{calculate\_tq}{$\mathbf{a},\mathbf{b},\mathbf{v},\mathbf{c},\mathbf{w}$,$\mathbf{u}$}
    \State \textbf{return} $u_1[ (b_2-c_2)v_1w_1 + (c_1-a_1)v_2w_1 + (a_1-b_1)v_1w_2] - u_2[ (b_1-c_1)v_2w_2 + (c_2-a_2)v_1w_2 + (a_2-b_2)v_2w_1]$
    \EndFunction
\end{algorithmic}
\end{mdframed}
\caption{Cumbersome formulas removed from Listings \ref{alg:anchored} and \ref{alg:max_inscribed} for neatness.}
\label{alg:util}
\end{listing}

\bibliography{inscribed}

\begin{thebibliography}{KLUvdH17}

\bibitem[BDDG82]{boyce}
J.~E. Boyce, D.~P. Dobkin, R.~L. Drysdale, and L.~J. Guibas.
\newblock Finding extremal polygons.
\newblock {\em Fourteenth Annual ACM Symposium on Theory of Computing}, 1982.

\bibitem[DS79]{dobkin}
D.~P. Dobkin and L.~Snyder.
\newblock On a general method for maximizing and minimizing among certain
  geometric problems.
\newblock {\em 20th Annual Symposium on Foundations of Computer Science}, 1979.

\bibitem[Kal14]{kallus}
Y.~Kallus.
\newblock Inextensible domains.
\newblock {\em Geometriae Dedicata}, 173(1):177--184, 2014.

\bibitem[KLUvdH17]{keikha}
V.~Keikha, M.~L\"offler, J.~Urhausen, and I.~van~der Hoog.
\newblock Maximum-area triangle in a convex polygon, revisited.
\newblock 2017.
\newblock arXiv:1705.11035v1.

\end{thebibliography}

\end{document}